\documentclass[11pt]{article}
\usepackage[T1]{fontenc}
\usepackage{lmodern}
\usepackage{amsmath,amssymb,amsthm,mathtools}
\usepackage{bm}
\usepackage{booktabs}
\usepackage{float}
\usepackage[margin=1in]{geometry}
\usepackage{microtype}
\usepackage{enumitem}
\usepackage{xcolor}
\usepackage[colorlinks=true,linkcolor=blue!65!black,citecolor=blue!65!black,
urlcolor=blue!65!black]{hyperref}
\newcommand{\K}{\mathrm{K}}
\newcommand{\RC}{\mathrm{RC}}
\newcommand{\Len}{\mathrm{L}}
\newcommand{\B}{\{0,1\}}
\newcommand{\ind}{\mathbf{1}}
\newcommand{\aeq}{\overset{+}{=}}
\newcommand{\aleq}{\overset{+}{\leq}}
\newcommand{\ageq}{\overset{+}{\geq}}
\newcommand{\Ess}{\mathcal{I}}
\newtheorem{theorem}{Theorem}
\newtheorem{proposition}[theorem]{Proposition}
\newtheorem{lemma}[theorem]{Lemma}

\begin{document}
\title{Beyond expressiveness in pairwise and higher-order models}
\author{Cyril Rommens$^{1,2}$, Pietro Traversa$^{1,2}$,\\
Guilherme Ferraz de Arruda$^{3}$, and Yamir Moreno$^{1,2}$\\[0.5em]
\small $^1$Institute for Biocomputation and Physics of Complex Systems (BIFI),\\
\small Universidad de Zaragoza, 50018 Zaragoza, Spain\\
\small $^2$Department of Theoretical Physics, University of Zaragoza, 50009 Zaragoza, Spain\\
\small $^3$Instituto de F\'\i sica Gleb Wataghin, Universidade Estadual de Campinas (UNICAMP), Campinas, Brazil\\
\small Corresponding author: \texttt{yamir@unizar.es}}
\date{}
\maketitle

\begin{abstract}
The debate over pairwise and higher-order models is often framed as a contest of expressive power. This framing is misleading. A graph equipped with arbitrary multivariate node functions can reproduce the node-level dynamics of a broad class of hypergraph models. Conversely, the existence of such an emulation does not make the grouping information encoded by the hypergraph disappear: it may simply be transferred from the structural description to the dynamical rule. We separate structural projection, functional representability, statistical identifiability, and mechanistic adequacy, four notions that are frequently conflated in this discussion. We then prove that the joint dependence that the projection obscures is not a matter of convention: the interaction order of a finite-state map is an invariant of the map itself --- for binary states, witnessed by the essential interaction sets read off its multilinear expansion --- so it is the same in every exact representation and no change of structural language can lower it. We then formulate the comparison as a description-length problem. At the unrestricted algorithmic level, a fixed compiler can redistribute information between structure and rule without imposing more than a constant overhead; equality of optimal description lengths additionally requires a suitable reverse translation. Therefore expressiveness alone cannot privilege either graphs or hypergraphs. Differences in preference arise only relative to explicit model classes, code families, regularity assumptions, and data. This leads to an operational minimum-description-length criterion that combines the cost of the structure, the cost of the rule conditional on that structure, and imperfect fit. Within that criterion we exhibit an explicit separation: for $M$ disjoint groups of size $k$ the edge list of the clique projection is asymptotically $k-1$ times longer than the hyperedge list it replaces, so projecting is a more expensive encoding of the same grouping rather than a cheaper one. Examples involving pairwise diffusion, group-sensitive Boolean dynamics, ecological interactions, and ambiguous projections illustrate graph-preferred, hypergraph-preferred, and observationally unresolved cases; bipartite and multilayer lifts give a further case in which those criteria tie and the choice turns on which entities a formalism posits as primitive. The resulting position is deliberately symmetric. Higher-order structure should not be inferred from phenomenology alone. Neither should the ability of an unrestricted graph rule to emulate a system be taken as evidence that a graph is its most parsimonious or scientifically adequate description.
\end{abstract}

\section{Introduction}
\label{sec:introduction}

Networks provide a successful language for describing systems whose behavior depends on relations among their components~\cite{newman2018networks,Strogatz2001,Boccaletti2006}. Their success has also made the boundaries of that language scientifically important. Many data sets record joint events, groups, assemblies, complexes, or co-activations rather than isolated dyadic contacts, which has motivated hypergraphs, simplicial complexes, and related higher-order representations of interaction structure~\cite{Torres2021,Battiston2020,BickGross2023,Bianconi2021book}. The position that graphs should be regarded as a special case of these richer formalisms has been argued at length, from an early advocacy piece~\cite{battiston2021physics} to a recent claim that higher-order interactions shape collective human behaviour~\cite{Battiston2025NHB} and a community-wide roadmap treating hypergraphs and simplicial complexes as the natural starting point for future work~\cite{AbiadRoadmap2026}. These formalisms have, in turn, prompted a legitimate question: when do they represent additional interaction structure, and when do they merely repackage dynamics that could have been written on a graph?

One influential line of argument answers this question through expressiveness. A graph determines which variables are available to each node, but it need not impose a pairwise-additive update rule. A node function
\begin{equation}
 \dot x_i=f_i(x_i,x_{\partial i})
 \label{eq:general-graph}
\end{equation}
may depend jointly and nonlinearly on the entire neighborhood. If the functions $f_i$ are unrestricted, a graph can therefore emulate the node-level dynamics generated by many hypergraph models. This observation is correct and useful: it dispels the common but mistaken identification of ``an edge joins two nodes'' with ``all dynamical contributions must be additive and pairwise.'' Recent work has stated the conclusion particularly strongly, describing graphs as maximally expressive and hypergraph models as a restricted subclass~\cite{Peixoto2026}. Other recent contributions have instead asked when higher-order dynamics are reducible to, or transformable into, pairwise models under specified dynamical assumptions and observables~\cite{Lucas2026,Xie2026,Llabres2026}, and a recent review formulates reducibility itself as a comparison against a declared class of lower-order models, relative to a target and a tolerance~\cite{PerezReche2026}. These results reinforce the need to state the admissible model classes and the level of equivalence at which a claim is made. Reference~\cite{Llabres2026} is instructive for what follows: a broad class of social impact models on hypergraphs maps exactly onto pairwise dynamics on a weighted projection, and the weights are constant only when the impact function is linear. Outside that case the reduction is pairwise in its skeleton while the grouping survives in state-dependent weights, which is the relocation this article is about.

The question has been taken up independently and almost simultaneously in two recent perspectives: that mechanism, behaviour and representation are three separate questions~\cite{PerezReche2026}, a distinction we fix below as the one between structure and mechanism; and that arity, the number of nodes an interaction couples, is not a universality label, so a collective phenomenon does not by itself classify an interaction order~\cite{VillegasMeloni2026}. We reach the same conclusions and add two formal ones: an invariant of the map that no exact representation can alter (Section~\ref{sec:invariant}), and an explicit separation in bits between a hyperedge list and its projection (Section~\ref{sec:ex-group}).

There is nevertheless a gap between functional containment and scientific model choice. The class of all functions on a neighborhood automatically contains every restricted functional family on the same arguments. This is a set-inclusion statement, not a criterion for deciding which restriction is shortest, identifiable from finite data, stable under intervention, or aligned with an observed mechanism. It is also why fit cannot settle the matter: a larger class fits any given data set at least as well as a smaller one it contains, which is precisely why model comparison in statistics charges for complexity, through an information criterion, a Bayesian prior, or a description length~\cite{Akaike1974,rissanen1978modeling,barron1998minimum,Grunwald2007}. The pattern is familiar outside network science: a sufficiently flexible class of smooth curves contains the ellipses traced by Kepler's laws and fits the data at least as well in sample, yet this containment is not a reason to abandon Kepler's laws. Containment is not an argument for the containing class; it is the reason a complexity penalty is needed.

The central point of this article is that higher-order information is not eliminated merely because the same map is evaluated on a graph. When a hypergraph is projected onto a graph, information about how neighbors are grouped may be lost from the skeleton. An exact graph-based emulation must then recover that information in some other part of the model, normally the node rule. Schematically,
\begin{equation}
 \underbrace{\text{grouping in the structure}}_{\text{hypergraph}}
 \;\;
 \begin{array}{c}
 \xrightarrow{\;\text{exact emulation}\;}\\[-4pt]
 \xleftarrow{\;\text{reverse recoverability}\;}
 \end{array}
 \;\;
 \underbrace{\text{grouping in the rule}}_{\text{graph emulation}}.
 \label{eq:relocation}
\end{equation}
Unrestricted, the forward direction costs a fixed compiler only constant overhead while the reverse is not automatic, and that asymmetry is what the description-length analysis below makes precise. Under restrictions the same transfer may be cheap, expensive, or immaterial depending on the model class, so it is not by itself an argument for either representation.

Throughout, we use two terms in fixed senses that the debate tends to blur. \emph{Structure} means the grouping as it is recorded in the structural description; \emph{mechanism} means the joint dependence exhibited by the dynamics, in whichever part of the model it is carried, so that the interaction order of Section~\ref{sec:compressibility} is a property of the mechanism. The distinction matters because the two are routinely used to license one another: ``the model uses a graph'' is a claim about the structural description, ``the interaction is pairwise'' is a claim about the mechanism, and neither entails the other. A graph skeleton does not make the mechanism pairwise, since the emulation of Section~\ref{sec:emulation} leaves the joint dependence intact inside the rule; and a group-valued skeleton does not make the mechanism higher-order, since the diffusion of Section~\ref{sec:ex-diffusion} has a perfectly good hypergraph description while remaining additive in pairs. Synergistic contagion is a published instance of the first: the probability that a contact transmits along an edge is made to depend on the states of the target's other neighbours~\cite{GomezGardenes2016}, so the skeleton is a graph while the mechanism is not pairwise, and the continuous transition becomes explosive. Section~\ref{sec:invariant} separates the two cleanly by showing that the mechanism, unlike the structure, has a representation-independent description.

We proceed as follows. Section~\ref{sec:setup} fixes notation and distinguishes four questions the debate tends to merge: what the structural skeleton records, what dynamics a formalism can reproduce, what a data set can identify, and what mechanism the model makes explicit. Section~\ref{sec:relocation} states what an unrestricted graph emulation does establish, and then closes three readings it does not support. Section~\ref{sec:rc} prices the resulting descriptions with algorithmic information: an exact structure--rule description has an intrinsic floor, a translation can move information between its two parts without lowering it, and a comparison of structural languages carries information only under conditions we state. Section~\ref{sec:mdl} replaces the incomputable floor by a practical minimum-description-length (MDL) criterion for restricted and possibly noisy models, which makes the assumptions carrying the scientific content explicit. Two results give the position sharper edges than a taxonomy would. Section~\ref{sec:invariant} shows that the joint dependence carried by a finite-state map is an invariant of the map, computable from it without reference to any structural language, so relocation moves where the grouping is stored but cannot dissolve it. Section~\ref{sec:ex-group} then turns that qualitative statement into bits: under explicit structural codes, the projected edge list of $M$ disjoint groups of size $k$ is longer than the hyperedge list by a factor approaching $k-1$. Sections~\ref{sec:regimes} and~\ref{sec:empirical} then work through five examples and the kinds of empirical evidence that can settle them.

The argument is not that hypergraphs are always necessary. Phenomena such as abrupt transitions, multistability, and synchronization are prominent in higher-order models~\cite{ferraz2024contagion,Battiston2020}, but do not by themselves certify higher-order interactions. Indeed, each can arise in models with purely dyadic coupling: abrupt collective transitions occur in threshold and complex-contagion models, both in mean-field formulations~\cite{Granovetter1978} and on networks~\cite{Watts2002,CentolaMacy2007}; multistability arises in generalized contagion with dyadic exposures and memory of past contacts, whose critical-mass class has coexisting stable low and high states, again in a mean-field treatment~\cite{DoddsWatts2004}; and explosive synchronization can emerge in Kuramoto oscillators with pairwise coupling~\cite{GomezGardenes2011}. Surveys of explosive transitions in networks are given in Refs.~\cite{Boccaletti2016,DSouza2019}. What fails is the inference from a macroscopic signature to an interaction order, and Section~\ref{sec:invariant} says why: the invariant is the interaction order of the map --- for binary states, the family of essential interaction sets $\Ess_i(F)$ --- and a bifurcation diagram does not determine it. Nor does every group-valued data set warrant treating groups as irreducible interaction units. At the same time, exact emulation by an unrestricted graph function does not establish pairwise additivity, compression, identifiability, or mechanistic equivalence. Under a comparison between specific, comparably regularized hypotheses, graphs may be preferred, hypergraphs may be preferred, or the available data may leave the choice unresolved.

\section{Setup}

\label{sec:setup}

\subsection{Notation and the structure--rule decomposition}
\label{sec:notation}
A model of a dynamical system contains at least two conceptually different objects: a structural description $S$ and a rule $D$ that acts on that structure. Their combination produces a dynamical map $F$,
\begin{equation}
 (S,D)\longmapsto F.
 \label{eq:model}
 \end{equation}
For a graph, $S$ may be an adjacency matrix; for a hypergraph it may be an incidence matrix; for a factor graph it may include both variable and factor nodes. The rule may be linear, additive, thresholded, stochastic, learned, or fully unrestricted. It is useful to begin without privileging any structural language, so we fix notation for both.

A \emph{graph} $G=(V,E)$ has $V=\{1,\ldots,N\}$ and $E\subseteq\{\{i,j\}:i\neq j\}$. A \emph{hypergraph} $H=(V,\mathcal{E})$ has $\mathcal{E}\subseteq 2^{V}\setminus\{\emptyset\}$, and we write $|e|$ for the size of a hyperedge $e$. The \emph{clique projection} $\pi(H)$ is the graph with edge set $\{\{i,j\}:i\neq j,\ \exists\,e\in\mathcal{E},\ \{i,j\}\subseteq e\}$. The \emph{neighborhood} $\partial i=\{j:\{i,j\}\in E\}$ collects the nodes adjacent to $i$, and $x_{\partial i}=(x_j)_{j\in\partial i}$ denotes the corresponding vector of states, taken in a fixed order induced by the node labels. Node states take values in a set $\mathcal{X}$, finite unless stated otherwise, so that a single node state is $x_i\in\mathcal{X}$ and a configuration is $\bm{x}=(x_1,\ldots,x_N)\in\mathcal{X}^{N}$.

One further object is needed throughout. Given a map $F$ with components $F_i$, its \emph{dependency graph} $G_F$ is the directed graph containing $j\to i$ whenever varying $x_j$ can change $F_i$ while the other variables are held fixed. We write $\partial_F i=\{j\neq i:\,j\to i\ \text{in}\ G_F\}$ for the in-neighbourhood of $i$ and $|\partial_F i|$ for its size. It need not agree with $\partial i$: a declared graph may carry edges the dynamics does not use. It records which variables each update actually uses, independently of how the model happens to be written, and is the graph on which Section~\ref{sec:emulation} constructs its emulation.

\subsection{Four distinct questions}
\label{sec:four}
The graph--hypergraph debate proceeds as though these objects could be ranked once and for all. They cannot, because at least four different comparisons are in play, and a representation can win one while losing another. Naming them separately is what makes the question tractable.
\paragraph{Structural representation.} What information is explicit in $S$? A clique projection records whether two nodes co-occur in at least one hyperedge, but does not generally record which co-occurrences belong to the same hyperedge. A bipartite incidence graph can record the latter, although it does so by adding factor or group nodes. Both objects may be called graphs in a broad mathematical sense, but they encode different primitive entities.

\paragraph{Functional representability.} Can one model reproduce the same input--output map as another? With arbitrary functions, the answer is often yes. This is the legitimate content of unrestricted graph emulation. Functional representability is an existence claim; it need not preserve the decomposition of the rule, its regularity, or the length of its description.

Expressiveness and functional representability are the same notion at two levels of quantification. For a model class $\mathcal{M}=(\mathcal{L},\mathcal{C})$, a structural language together with a class of admissible rules, write
\[
 \mathcal{F}(\mathcal{M})=\{\,F:\ \exists\,S\in\mathcal{L},\,D\in\mathcal{C} \text{ with } D(S)=F\,\}
\]
for the maps it can realise. Expressiveness compares these sets --- ``graphs are maximally expressive'' is $\mathcal{F}(\mathcal{M}_H)\subseteq\mathcal{F}(\mathcal{M}_G)$ --- while functional representability is membership $F\in\mathcal{F}(\mathcal{M})$ for a particular map. We keep the pointwise term because the other three questions here are likewise asked of a given system and a given data set, and because ``expressiveness'' is the word whose loose use we are taking apart. The distinction is not idle: Proposition~\ref{prop:emulation} is class-level, while Sections~\ref{sec:ex-group} and~\ref{sec:ex-lift} turn on pointwise statements, which is why the proposition cannot settle them.

\paragraph{Statistical identifiability.} Can observations distinguish the candidate models? If one class strictly contains another, the larger class can attain training fit no worse than that of the smaller class, so fit alone cannot distinguish the two, nor justify choosing the smaller. Identifiability then depends on restrictions, priors, interventions, and the resolution at which interactions are observed.

\paragraph{Mechanistic adequacy.} Does the representation expose the units, symmetries, and causal organization that the model purports to explain? Two descriptions may generate the same trajectory while assigning the regularity to different places. A hypergraph may state that a group is an interaction unit; an emulating graph may store the same grouping in an opaque node function. Equality of predictions does not make these hypotheses identical.

\vspace{0.5cm}
\noindent These notions need not rank models in the same way. A representation can be maximally expressive but statistically weak because it is too flexible. A mechanistically transparent model can be more expensive in bits if it includes independently measured structure. A short effective rule can predict well while being unsuitable for interventions outside the observed regime. The graph--hypergraph debate becomes tractable only after the intended comparison is stated. Each example in Section~\ref{sec:regimes} closes by naming which of these questions decides the comparison; in several cases more than one is in play, and they need not agree. Each of the four decides at least one of the cases in Section~\ref{sec:regimes}.

\section{Emulation and higher-order information}
\label{sec:relocation}
The first subsection states the emulation argument: if node functions are unrestricted, a graph reproduces the node-level map of any hypergraph model. The remaining subsections separate that reproduction from a stronger claim often read into it, namely that the interaction was therefore never higher-order. Reproducing a map and accounting for how it arises are different achievements, and the distance between them is where the scientific content of the debate sits.

\subsection{Unrestricted graph emulation}
\label{sec:emulation}
Consider a finite-state hypergraph dynamical system on $N$ labelled nodes, with node-level map $F$ and dependency graph $G_F$ as defined in Section~\ref{sec:notation}.
\begin{proposition}[Unrestricted graph emulation]
\label{prop:emulation}
If arbitrary node functions are admissible on $G_F$, then the graph model can reproduce the same node-level map $F$ by assigning $F_i$ itself as the rule of node $i$.
\end{proposition}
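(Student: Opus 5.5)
The proof is essentially a verification that the dependency graph $G_F$ already supplies each node with exactly the arguments $F_i$ needs. The plan has three steps: show that $F_i$ factors through $(x_i,x_{\partial_F i})$; take the induced function as the rule of node $i$; and check that the graph model then reproduces $F$ componentwise.

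\emph{Step one (factorization).} For each node $i$, I will show that $F_i(\bm{x})$ depends only on $x_i$ and $x_{\partial_F i}$. Let $\bm{x}$ and $\bm{x}'$ agree on $\{i\}\cup\partial_F i$. Since $\mathcal{X}^N$ is finite, I can pass from $\bm{x}$ to $\bm{x}'$ by changing one coordinate $j\notin\{i\}\cup\partial_F i$ at a time. At each such change the other variables are held fixed. Because $j\not\to i$ in $G_F$, the definition in Section~\ref{sec:notation} says varying $x_j$ cannot change $F_i$ with the rest fixed. This holds at every background configuration, which is how I read ``can change''. So $F_i$ is unchanged at every step, and by induction on the number of differing coordinates $F_i(\bm{x})=F_i(\bm{x}')$. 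Hence there is a well-defined $f_i:\mathcal{X}\times\mathcal{X}^{|\partial_F i|}\to\mathcal{X}$ (or into the output space of $F_i$) with $F_i(\bm{x})=f_i(x_i,x_{\partial_F i})$, where $x_{\partial_F i}$ is ordered by node labels as in the setup.

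\emph{Step two (rule assignment).} On the graph $G_F$, or its undirected symmetrization if a graph in the sense of Section~\ref{sec:notation} is required, I assign $f_i$ as the node rule of $i$. This is admissible because arbitrary node functions are allowed. If the symmetrized neighborhood $\partial i\supseteq\partial_F i$ is larger, I take the rule to be $f_i$ composed with the projection onto the coordinates in $\partial_F i$, ignoring the extra neighbors. The paper already remarks that a declared graph may carry edges the dynamics does not use, so this is permitted.

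\emph{Step three (conclusion).} The graph model's update at node $i$ is then $f_i(x_i,x_{\partial_F i})=F_i(\bm{x})$ for every configuration $\bm{x}$. So the induced map equals $F$ componentwise, which is the claim.

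The only step needing care is step one. The definition of $G_F$ speaks of single-variable variations, while the factorization needs invariance under simultaneous changes of several non-neighbors. The coordinate-by-coordinate path argument closes this gap. It relies on reading ``can change'' as ``for some background configuration'', so that the absence of an edge means invariance at every background, and the finiteness of $\mathcal{X}$ keeps the path finite. Everything else is bookkeeping about ordering and the possible directed-versus-undirected mismatch.
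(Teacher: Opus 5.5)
Your proposal is correct and is the paper's proof with the details filled in: the paper simply asserts that, by construction of $G_F$, $F_i$ depends only on $x_i$ and $x_{\partial_F i}$, and then assigns $F_i$ as the node rule; your coordinate-by-coordinate path argument and your remark on symmetrizing the directed $G_F$ make explicit what the paper leaves implicit. One small correction: the path is finite because it changes at most $N$ coordinates, so the finiteness of $\mathcal{X}$ is not what your argument actually uses.
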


\begin{proof}
By construction, $F_i$ depends only on $x_i$ and the variables in the in-neighborhood of $i$ in $G_F$. Hence $F_i$ is an admissible multivariate node function on that graph. Assigning it to each node reproduces $F$ exactly.
\end{proof}

The same observation holds for continuous variables when the admissible graph rule class contains the induced functions with their required regularity. It also extends beyond hypergraphs: any finite packaging of the arguments of $F_i$ can be forgotten once the complete function $F_i$ is retained. Proposition~\ref{prop:emulation} has two important consequences. First, macroscopic behavior is not a certificate of structural order. If an abrupt transition observed in a hypergraph model can be produced by another rule on a graph, the transition alone cannot identify the microscopic representation. Second, node-state trajectories alone cannot distinguish a hypergraph model from an unrestricted graph class that contains its exact emulation. Fit alone is therefore insufficient.

Nor does the proposition compare unrestricted graph rules with an equally unrestricted higher-order language. The strict containment discussed in Ref.~\cite{Peixoto2026} is valid for the particular comparison between arbitrary neighborhood functions and the hypergraph rule family defined there, in which contributions associated with incident hyperedges are aggregated under specified reciprocity and symmetry constraints. The qualification matters: restricting one class while leaving the other unrestricted establishes a hierarchy of those two model classes, but not a universal ordering of structural languages.

Taken on its own, Proposition~\ref{prop:emulation} invites a stronger reading than it supports. It does not imply that the graph skeleton contains the hypergraph, that the rule is pairwise additive, or that regularity and description length are preserved. The next three subsections close those readings in turn. Section~\ref{sec:groups} shows that the projection discards the grouping, which the rule must then carry; Section~\ref{sec:recoding} shows that recoding the arguments makes a function look unary without making it additive or well behaved; and Section~\ref{sec:compressibility} shows that even a very short rule can be irreducibly high-order. Section~\ref{sec:invariant} then collects these observations into a positive statement: for finite-state maps the joint dependence has a canonical form that no exact representation can alter.

\subsection{A projected neighborhood does not determine its groups}
\label{sec:groups}
Let $H$ be a hypergraph and $\pi(H)$ its clique projection. The variables available to node $i$ in the projected graph form the union of the other members of its incident hyperedges. The map from a family of groups to this union is many-to-one.

\begin{lemma}[Grouping ambiguity]
\label{lem:union}
Distinct collections of subsets can have the same union and the same clique projection.
\end{lemma}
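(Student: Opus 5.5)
The statement is existential, so I will prove it with one explicit counterexample and then note how it generalizes. The canonical witness is the triangle. Take $V=\{1,2,3\}$ and compare the hypergraph $H_1=(V,\{\{1,2,3\}\})$, which has a single 3-hyperedge, with $H_2=(V,\{\{1,2\},\{1,3\},\{2,3\}\})$, which has three dyadic hyperedges. The two hyperedge collections are distinct as families of subsets: $H_1$ contains a set of size three and $H_2$ does not.

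First, I check the unions. Both collections cover $\{1,2,3\}$. Seen from node $1$, the union of the other members of its incident hyperedges is $\{2,3\}$ in both cases: in $H_1$ it comes from the single group, and in $H_2$ it comes from the two groups $\{1,2\}$ and $\{1,3\}$. By symmetry the same holds at nodes $2$ and $3$. So the projected neighbourhood $\partial i$, which is the object Section~\ref{sec:groups} is concerned with, coincides node by node.

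Second, I check the projections directly from the definition of $\pi$ in Section~\ref{sec:notation}. In $H_1$ every pair $\{i,j\}$ with $i\neq j$ lies in the hyperedge $\{1,2,3\}$. In $H_2$ each such pair is itself a hyperedge. Hence $\pi(H_1)=\pi(H_2)=K_3$, and the map $\mathcal{E}\mapsto\pi(H)$ is not injective.

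To show the phenomenon is generic and not an accident of three nodes, I would add a remark. For any hyperedge $e$ with $|e|\ge3$, replacing $e$ by the family of all $2$-subsets of $e$, or by any family of proper subsets whose pairs still cover every pair in $e$, leaves both the union and the clique projection unchanged. This holds because $\pi$ depends only on which pairs are covered.

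There is no real obstacle here. The only point needing care is to state precisely what "same union" means. I would fix it as equality of $\bigcup_{e\ni i}(e\setminus\{i\})$ for every $i$, and observe that this is implied by equality of the clique projections. That way the lemma's two conditions are consistent with how $\partial i$ was defined in Section~\ref{sec:notation}.
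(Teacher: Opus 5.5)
Your proof is correct and takes the same route as the paper: exhibit one large hyperedge alongside a family of its proper subsets that covers all of its pairs, so the two families differ while the per-node unions and the clique projection coincide. The paper uses the four-node witness $\{\{1,2,3,4\}\}$ against the four triples, which projects to $K_4$, keeps both families strictly higher-order, and is reused for the dynamics that follow; your triangle witness makes $H_2$ an ordinary graph, but that does not matter for the lemma as stated.
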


\begin{proof}
On four nodes, consider
\begin{equation}
 H_1=\bigl\{\{1,2,3,4\}\bigr\}
 \quad\text{and}\quad
 H_2=\bigl\{\{1,2,3\},\{1,2,4\},\{1,3,4\},\{2,3,4\}\bigr\}.
 \label{eq:H12}
\end{equation}
Both project to $K_4$, although their hyperedge families are different.
\end{proof}

Now let node $i$ activate when all other members of at least one incident hyperedge are active. For node $1$ the two structures in Eq.~\eqref{eq:H12} give
\begin{align}
 F^{(1)}_1(x_2,x_3,x_4)&=x_2x_3x_4,
 \label{eq:group-functions-1}\\
 F^{(2)}_1(x_2,x_3,x_4)&=(x_2x_3)\lor(x_2x_4)\lor(x_3x_4).
 \label{eq:group-functions}
\end{align}
At $(x_2,x_3,x_4)=(1,1,0)$ the first output is $0$ and the second is $1$. The projected graph and its node states are identical; the difference lies in the grouping. This does not contradict Proposition~\ref{prop:emulation}: for a fixed and known $H_1$ or $H_2$ one may define the corresponding function directly on the neighborhood of $K_4$, but the function is then indexed by the hypergraph that has been removed from the skeleton. The emulation succeeds because the rule carries the missing information.

More generally, suppose a graph node rule is expressed as

\begin{equation}
 f_i(x_i,x_{\partial i})=g_i(x_i)+
 \sum_{\ell=1}^{m_i}h_{i\ell}(x_i,x_{Y_{i\ell}}),
 \qquad Y_{i\ell}\subseteq \partial i.
 \label{eq:subset-rule}
\end{equation}
The adjacency matrix determines $\partial i$ but not the selected family $\mathcal{Y}_i=\{Y_{i1},\ldots,Y_{im_i}\}$. Consequently, the complete model is
\begin{equation}
 \bigl(G,\{\mathcal{Y}_i\}_{i=1}^N,
 \{g_i\}_{i=1}^N,\{h_{i\ell}\}_{i,\ell}\bigr),
 \label{eq:complete-model}
\end{equation}
not the graph alone. Whether $\mathcal{Y}_i$ is classified as structure or as part of the rule is a matter of representation; it must be specified somewhere.

\begin{proposition}[The emulating model carries an incidence structure]
\label{prop:incidence}
Let the rule have the form~\eqref{eq:subset-rule}. Then:~\emph{(i)} the data $\bigl(\{\mathcal{Y}_i\},\{g_i\},\{h_{i\ell}\}\bigr)$ determine $F$, and with it the dependency graph $G_F$, so that $G$ may be omitted from Eq.~\eqref{eq:complete-model};~\emph{(ii)} for every $i$,
\begin{equation}
 \partial_F i\subseteq\bigcup_{\ell=1}^{m_i}Y_{i\ell};
 \label{eq:union-recovers}
\end{equation}
and~\emph{(iii)} the assignment $i\mapsto\{\,\{i\}\cup Y_{i\ell}\,\}_{\ell=1}^{m_i}$ is the incidence structure of a hypergraph on $V$.
\end{proposition}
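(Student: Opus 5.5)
The proof is a direct unpacking of the rule form~\eqref{eq:subset-rule} together with the definition of the dependency graph from Section~\ref{sec:notation}. The three parts are taken in order, and (ii) is derived from (i).

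For \emph{(i)}, observe that the right-hand side of~\eqref{eq:subset-rule} is a composite expression. Its only inputs are $x_i$, the index sets $Y_{i\ell}$ (which say which coordinates of $\bm{x}$ to feed to $h_{i\ell}$), and the functions $g_i,h_{i\ell}$. Given the triple $\bigl(\{\mathcal{Y}_i\},\{g_i\},\{h_{i\ell}\}\bigr)$, each $F_i=f_i$ can therefore be evaluated at every configuration $\bm{x}\in\mathcal{X}^N$. This determines $F$. Since $G_F$ is defined purely from $F$ (by testing whether varying $x_j$ alone can change $F_i$), $G_F$ is determined as well. I would add one remark: the role of $G$ in~\eqref{eq:complete-model} is only to certify $Y_{i\ell}\subseteq\partial i$, and neither evaluating $F$ nor computing $G_F$ uses it. Hence $G$ is redundant given the other data.

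For \emph{(ii)}, take $j\neq i$ with $j\notin U_i:=\bigcup_\ell Y_{i\ell}$. In~\eqref{eq:subset-rule}, $x_j$ appears neither in $g_i(x_i)$ nor in any $h_{i\ell}(x_i,x_{Y_{i\ell}})$. So for any two configurations that differ only in coordinate $j$, every summand is unchanged and $f_i$ is unchanged. By definition, $j\to i$ is then not an arc of $G_F$, which gives $j\notin\partial_F i$. Contraposition yields~\eqref{eq:union-recovers}. I would also note that the inclusion can be strict, for instance when some $h_{i\ell}$ ignores an argument or when summands cancel. This is why the statement gives an inclusion and not an equality.

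For \emph{(iii)}, each set $e_{i\ell}=\{i\}\cup Y_{i\ell}$ is a nonempty subset of $V$, because it contains $i$. The family $\mathcal{E}=\{e_{i\ell}: i\in V,\ 1\le\ell\le m_i\}$ is therefore a subset of $2^V\setminus\{\emptyset\}$, which is exactly the definition of a hypergraph $H=(V,\mathcal{E})$ in Section~\ref{sec:notation}. Node $i$ is incident to each of its own $e_{i\ell}$, so the assignment $i\mapsto\{e_{i\ell}\}_\ell$ lists hyperedges incident to $i$. Two technical points need handling:
\begin{itemize}
\item Repeated sets (the same group produced by two nodes, or twice by one node) collapse under the set definition of $\mathcal{E}$. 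Any multiplicity must be kept in the rule indices $\ell$, or we must allow a multiset.
\item The incident family of $i$ in $H$ may also contain sets $e_{j\ell}$ with $j\neq i$ that happen to contain $i$.
\end{itemize}
I would therefore phrase (iii) as saying that the assignment \emph{defines} a hypergraph, rather than that it equals the full incidence list of $H$ at $i$.

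I expect (iii) to be the main obstacle. It is not hard, but it is the step where the statement's wording, ``is the incidence structure,'' must be given a precise reading; (i) and (ii) are immediate.
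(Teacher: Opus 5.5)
Your proposal is correct and follows the paper's proof essentially step for step: (i) because the right-hand side of Eq.~\eqref{eq:subset-rule} uses the graph only through the side condition $Y_{i\ell}\subseteq\partial i$, (ii) by the same contrapositive that a variable lying in no $Y_{i\ell}$ cannot affect $f_i$, and (iii) as a restatement of the definition. Your extra care in (iii) about multiplicities and about sets contributed by other nodes is consistent with the paper's reading. That reading takes an incidence structure to mean only that each node is assigned a family of subsets of $V$ containing it, which is exactly your ``defines a hypergraph'' phrasing.
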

\begin{proof}
The right-hand side of Eq.~\eqref{eq:subset-rule} refers to the graph nowhere except through the side condition $Y_{i\ell}\subseteq\partial i$, so every graph whose neighborhoods contain the selected subsets yields the same $f_i$; the remaining data therefore determine $f_i$ for every $i$, hence $F$ and $G_F$, which is~\emph{(i)}. For \emph{(ii)}, if some $j\in\partial_F i$ occurred in no $Y_{i\ell}$ then $f_i$ would not depend on $x_j$ and $j\to i$ would be absent from $G_F$. Part~\emph{(iii)} restates the definition of an incidence structure: each node is assigned a family of subsets of $V$ containing it.
\end{proof}

Proposition~\ref{prop:incidence} sharpens what the emulation achieves. Neither graph is a primitive of the model: the declared one does no work that the selected subsets do not already do, and the dependency one is derived from the same data. What cannot be dropped is the subsets: what looks like a graph model is a higher-order model that happens to admit a graph as a derived display, and the grouping has been renamed rather than eliminated.

One objection must be met before this is of any use, and Ref.~\cite{Peixoto2026} raises it: every $f_i$ admits the trivial decomposition $m_i=1$, $Y_{i1}=\partial_F i$, whose induced hyperedges are the closed in-neighborhoods of $G_F$. Read that way the proposition is empty, since the family $\{\mathcal{Y}_i\}$ is then computable from $G_F$ and names nothing new. What gives it content is that the decomposition~\eqref{eq:subset-rule} is not arbitrary: a rule of interaction order below $|\partial_F i|$ admits decompositions with $|Y_{i\ell}|<|\partial_F i|$, and the selected subsets then carry information the adjacency does not. Proposition~\ref{prop:incidence} is therefore a statement about the parametrization chosen, and it inherits whatever arbitrariness that choice has. Section~\ref{sec:invariant} removes the arbitrariness by replacing $\{\mathcal{Y}_i\}$ with a family determined by $F$ alone.

\subsection{Scalar recoding does not decompose interaction}
\label{sec:recoding}
Another route to an exact emulation is to encode a vector as a scalar. Let $\bm{x}\in\{0,\ldots,A-1\}^k$ and consider the positional code
\begin{equation}
 \psi:\{0,\ldots,A-1\}^k\longrightarrow\{0,\ldots,A^k-1\},
 \qquad
 \psi(\bm{x})=\sum_{r=1}^k A^{r-1}x_r,
 \label{eq:positional}
\end{equation}
which is a bijection onto its stated codomain. For any $f:\{0,\ldots,A-1\}^k\to\mathcal{X}$, the state set of Section~\ref{sec:notation}, we may therefore define $h:\{0,\ldots,A^k-1\}\to\mathcal{X}$ by $h(z)=f(\psi^{-1}(z))$, so that $f(\bm{x})=h(\psi(\bm{x}))$. Since $\psi$ is itself a sum of univariate terms, this is the single-index form $f(\bm{x})=h\bigl(\sum_r a_r(x_r)\bigr)$ with $a_r(x_r)=A^{r-1}x_r$, which is the precise sense in which any multivariate rule can be written as an outer function of one contribution per argument~\cite{Peixoto2026}. The construction is exact, and it makes the rule look simpler: a function of $k$ arguments has become a function of one. The appearance is only in the arity. The domain of $h$ still has $A^k$ elements, so for a generic $f$ it carries the same table of values, in the same number of bits. Collapsing the arguments has renamed the inputs, not compressed the rule.

Note what the construction does not use: no property of $\mathcal{X}$ beyond its being a set, and no property of $f$ beyond its being a function. An argument insensitive to the values a rule takes cannot be evidence about how those values depend jointly on the inputs.

After the recoding $h$ has a single argument, and a function of one argument shows no interaction among its arguments. It is tempting to read that as an absence of interaction in the system, but interaction is a property relative to the original variables, not to a coordinate that bundles them. XOR makes the point concrete. Consider $f(x_1,x_2)=x_1\oplus x_2$. It can be written as $h(x_1+2x_2)$, with $h(0)=h(3)=0$ and $h(1)=h(2)=1$. However, there are no real-valued functions $a,b$ and constant $c$ satisfying
\begin{equation}
 x_1\oplus x_2=c+a(x_1)+b(x_2)
 \label{eq:xor-additive}
\end{equation}
for every binary input. Every additively separable function obeys
\begin{equation}
 f(0,0)+f(1,1)=f(1,0)+f(0,1),
\end{equation}
whereas XOR would require $0=2$. The joint dependence has been hidden in $h$, not decomposed. The distinction to hold on to is between $h\bigl(\sum_r a_r(x_r)\bigr)$ and $\sum_r h_r(x_r)$. The recoding delivers the first, interaction order is a statement about the second, and XOR satisfies the first while failing the second.

For continuous variables the recoding meets an obstacle with no finite counterpart. No bijection $\mathbb{R}^k\to\mathbb{R}$ is continuous for $k>1$~\cite{Munkres}\footnote{If such an $f$ were continuous and bijective, $\mathbb{R}^k\setminus\{p\}$ would be connected for $k>1$ while its image $\mathbb{R}\setminus\{f(p)\}$ is not.}, so a model class that requires regular rules --- as any well-posed ODE does, local Lipschitz continuity being the standard sufficient condition for existence and uniqueness --- excludes $\psi$ outright. The Kolmogorov--Arnold theorem is not a counterexample~\cite{Kolmogorov1957,Arnold1957}: it uses $2n+1$ scalar channels, not one bijective coordinate.

\subsection{Compressibility and interaction order are independent}
\label{sec:compressibility}
Description length and interaction order are easily conflated, because a rule coupling $k$ variables has a truth table of $2^k$ rows, and a large table suggests a long description. The two are independent: compressibility concerns the number of bits needed to specify a rule, interaction order concerns how many variables must enter jointly in an additive decomposition over the original variables. A rule can be very short and still couple every variable at once.

The parity function provides a simple illustration:
\begin{equation} p_k(x_1,\ldots,x_k)=x_1\oplus\cdots\oplus x_k
 =\frac{1}{2}\left[1-\prod_{r=1}^k(1-2x_r)\right],  \label{eq:parity}
\end{equation}
where $x_r\in\{0,1\}$ and $\oplus$ denotes addition modulo two. The output is one precisely when an odd number of inputs are one. Although the truth table contains $2^k$ rows, the rule has the short uniform description ``return the parity of the $k$ inputs.'' Formally, there is a program of constant length that computes $p_k$ given $k$, so $\K(p_k)\leq\K(k)+O(1)=O(\log k)$. The exponential size of the truth table should not be confused with the algorithmic description length of the rule.

Despite this short description, parity contains an irreducible joint dependence on all $k$ inputs. For example, when $k=3$,
\begin{equation}
\begin{split}
 p_3(x_1,x_2,x_3)
 ={}&x_1+x_2+x_3
 -2(x_1x_2+x_1x_3+x_2x_3)\\
 &+4x_1x_2x_3.
\end{split}
\label{eq:parity-three}
\end{equation}
The final term involves the three variables simultaneously. More generally, expanding the product in Eq.~\eqref{eq:parity} produces a nonzero term proportional to $x_1x_2\cdots x_k$, with coefficient $(-1)^{k+1}2^{k-1}$.

The mixed-difference operator makes the irreducibility precise. For a function on the Boolean cube, define
\begin{equation}
 \bigl(\Delta_i f\bigr)(\bm{x}_{-i})
 =
 f(x_1,\ldots,x_{i-1},1,x_{i+1},\ldots,x_k)
 -
 f(x_1,\ldots,x_{i-1},0,x_{i+1},\ldots,x_k).
 \label{eq:mixed-difference}
\end{equation}
If a term does not depend on $x_i$, then $\Delta_i$ annihilates it. Suppose that a function could be decomposed as
\begin{equation}
 f(x_1,\ldots,x_k)=\sum_{A\subsetneq\{1,\ldots,k\}} f_A(\bm{x}_A),
 \label{eq:lower-order-decomposition}
\end{equation}
so that every component omits at least one of the $k$ variables. Applying the full mixed difference $\Delta_1\cdots\Delta_k$ would annihilate every term in the sum and therefore give zero. Parity does not satisfy this condition:
\begin{equation}
 \Delta_1\cdots\Delta_k p_k
 =
 (-1)^{k+1}2^{k-1}\neq 0.
 \label{eq:parity-mixed-difference}
\end{equation}
Consequently, parity cannot be expressed as a sum of lower-order components, even though the algorithm specifying it is very short.

This argument does not imply that a graph cannot reproduce parity. If all $k$ inputs belong to the neighborhood of a node, an unrestricted graph-based node function can certainly evaluate Eq.~\eqref{eq:parity}. The point is that the resulting node function still depends jointly on all $k$ neighbors: the graph representation has placed the high-order dependence inside the dynamical rule rather than converted it into pairwise-additive interactions.

Parity is therefore a counterexample to the implication ``compressible rule implies low interaction order.'' The converse direction is equally easy and completes the independence: a rule $f_i(x_{\partial i})=\sum_{j\in\partial i}w_{ij}x_j$ whose weights are drawn at random and stored to $b$ bits has interaction order one --- it is additively separable, so no two neighbours enter jointly --- and description length $\Theta(b|E|)$, incompressible for all but a vanishing fraction of weight assignments. Long descriptions of low-order mechanisms and short descriptions of high-order ones both exist.

This bears on an argument made for the graph side. Reference~\cite{Peixoto2026} notes that specifying a function of $k$ arguments over an alphabet of size $A$ requires $A^{k}\log_2 A$ bits in general, which is beyond reach already for modest $k$, and concludes that the functions available for modelling are compressible and hence should not be treated as irreducible monoliths. The premise and the first inference are correct. The last step is not: parity is as compressible as a rule can be and remains irreducibly of order $k$. Compressibility licenses modelling; it does not license decomposition into lower-order parts, and the two are the independent axes just described.

Exact emulation does not make interaction order, regularity, or grouping disappear. It only shows that they can be encoded elsewhere. The next subsection makes that statement exact.

\subsection{Interaction order is an invariant of the dynamics}
\label{sec:invariant}
The mixed differences of Eq.~\eqref{eq:mixed-difference} do more than settle the case of parity. We now apply them to the components $F_i$ of the full map, so that subsets are indexed by nodes, $A\subseteq V$, rather than by the argument positions $1,\ldots,k$ of Section~\ref{sec:compressibility}. When node states are binary, $\mathcal{X}=\B$, so that $\bm{x}$ ranges over the Boolean cube $\B^{N}$, every component of $F$ has a unique multilinear expansion~\cite{ODonnell2014},
\begin{equation}
 F_i(\bm{x})=\sum_{A\subseteq V}c_{i,A}\prod_{j\in A}x_j,
 \qquad
 c_{i,A}=\bigl(\Delta_A F_i\bigr)\big|_{\bm{x}=\bm{0}},
 \qquad
 \Delta_A=\prod_{j\in A}\Delta_j,
 \label{eq:multilinear}
\end{equation}
the coefficients being obtained by M\"obius inversion over the Boolean cube. Explicitly, this is inclusion and exclusion over the subsets of $A$,
\[
 c_{i,A}=\sum_{B\subseteq A}(-1)^{|A|-|B|}F_i(\ind_B),
\]
where $\ind_B$ denotes the configuration with the nodes of $B$ set to one and the rest to zero; applying $\Delta_j$ for every $j\in A$ and evaluating at $\bm{x}=\bm{0}$ produces the same alternating sum. For a node whose update depends on two variables, with $F_i=x_1\oplus x_2$ as in Section~\ref{sec:recoding}, this gives $c_{i,\{1,2\}}=F_i(1,1)-F_i(1,0)-F_i(0,1)+F_i(0,0)=-2$ and $c_{i,\{1\}}=c_{i,\{2\}}=1$, so that $F_i=x_1+x_2-2x_1x_2$. Call
\begin{equation}
 \Ess_i(F)=\bigl\{A\subseteq V:\ c_{i,A}\neq0,\ |A|\geq2\bigr\},
 \qquad
 k_i(F)=\max\bigl\{|A|:\ c_{i,A}\neq0\bigr\}
 \label{eq:essential}
\end{equation}
the \emph{essential interaction sets} and the \emph{interaction order} of the update of node $i$.

\begin{proposition}[Order is the degree of the expansion]
\label{prop:order}
$F_i$ is a sum of functions each depending on at most $d$ of the variables if and only if $c_{i,A}=0$ whenever $|A|>d$. Hence $k_i(F)$ is the least such $d$.
\end{proposition}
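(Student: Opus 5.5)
The proof splits into two directions; the second sentence then follows by taking the least admissible $d$.

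\emph{If the coefficients vanish above degree $d$.} Suppose $c_{i,A}=0$ whenever $|A|>d$. The multilinear expansion~\eqref{eq:multilinear} then reads $F_i=\sum_{|A|\le d}c_{i,A}\prod_{j\in A}x_j$. Each monomial $\prod_{j\in A}x_j$ depends only on the $|A|\le d$ variables in $A$. So this expansion is already a decomposition of $F_i$ into functions each depending on at most $d$ variables, and nothing further is needed.

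\emph{If $F_i$ decomposes into pieces of at most $d$ variables.} Write $F_i=\sum_\ell g_\ell(\bm{x}_{B_\ell})$ with $|B_\ell|\le d$. I would expand each $g_\ell$ separately. Viewed as a function on the Boolean cube in the variables $B_\ell$, it has its own multilinear expansion, and the Möbius-inversion formula shows that this expansion contains only monomials indexed by subsets of $B_\ell$. Summing over $\ell$ gives a multilinear polynomial representing $F_i$ with every monomial of degree at most $d$. Uniqueness of the expansion~\eqref{eq:multilinear} (cited from~\cite{ODonnell2014}) then forces $c_{i,A}=0$ for $|A|>d$.

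The only step that needs care is this reverse direction, specifically the claim that expanding $g_\ell$ introduces no monomial outside $B_\ell$. There are two ways to check it.
\begin{itemize}[nosep]
\item Via the coefficient formula: $c_{i,A}=(\Delta_A F_i)|_{\bm 0}$, and $\Delta_j$ annihilates any term not depending on $x_j$, exactly as in the argument around Eq.~\eqref{eq:lower-order-decomposition}. If $|A|>d$, then $A\not\subseteq B_\ell$ for every $\ell$. So some $j\in A$ lies outside $B_\ell$, and $\Delta_A$ kills $g_\ell$. Summing over $\ell$ gives $c_{i,A}=0$ directly, without invoking uniqueness.
\item Via linearity: $c_{i,A}$ is linear in $F_i$, so the same conclusion follows by applying the annihilation to each $g_\ell$ and adding.
\end{itemize}
The first route is the one I would write, since it reuses the annihilation argument of Section~\ref{sec:compressibility} and sidesteps uniqueness altogether.

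Finally, the set of admissible $d$ is upward closed, and by the equivalence just proved it equals $\{d:\ d\ge \max\{|A|: c_{i,A}\neq0\}\}$. Its least element is therefore $k_i(F)$ as defined in Eq.~\eqref{eq:essential}, with the convention $k_i(F)=0$ when only $c_{i,\emptyset}$ can be nonzero, corresponding to a constant $F_i$.
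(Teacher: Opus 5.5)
Your proof is correct, and the route you say you would write is the paper's own argument. Sufficiency is read off the truncated multilinear expansion. For necessity, when $|A|>d$ every summand omits some $j\in A$, so the commuting product $\Delta_A$ annihilates it and $c_{i,A}=(\Delta_A F_i)|_{\bm{x}=\bm{0}}=0$; your alternative via uniqueness of the expansion is also valid.
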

\begin{proof}
Sufficiency is immediate, since each monomial of the expansion involves at most $d$ variables. For necessity, let $F_i=\sum_B f_B$ with $|B|\leq d$ and take $|A|>d$. Every $f_B$ omits some $j\in A$, so $\Delta_j f_B=0$; the operators $\Delta_j$ commute, hence $\Delta_A F_i=0$ and $c_{i,A}=0$. Equations~\eqref{eq:lower-order-decomposition}--\eqref{eq:parity-mixed-difference} are the case $d=k-1$ applied to parity.
\end{proof}

The decomposition condition in Proposition~\ref{prop:order} makes no reference to the multilinear expansion, so it remains available when the states are not binary, provided the values of $F_i$ can be added --- a condition on the codomain of $F_i$ rather than on the state set, though the two coincide here since $F$ maps $\mathcal{X}^N$ into $\mathcal{X}^N$: that is, for any finite $\mathcal{X}\subseteq\mathbb{R}$, which is how the states are already treated from Section~\ref{sec:recoding} onwards. For such an $\mathcal{X}$ we take it as the definition of the interaction order: $k_i(F)$ is the least $d$ for which $F_i=\sum_{\ell}f_{\ell}(\bm{x}_{B_{\ell}})$ with $B_{\ell}\subseteq V$ and $|B_{\ell}|\leq d$ for every $\ell$, a least value that exists because $f_1=F_i$ makes $d=N$ admissible. Proposition~\ref{prop:order} says that for $\mathcal{X}=\B$ this agrees with Eq.~\eqref{eq:essential}. So defined, $k_i(F)$ refers only to the values of $F_i$ and to no structural language, so the invariance below holds for it as it stands. What the binary case adds is computability and resolution: the expansion~\eqref{eq:multilinear} evaluates the order, and refines it into the family $\Ess_i(F)$, which we use only there.

\begin{proposition}[Invariance under exact representation]
\label{prop:invariance}
$\Ess_i(F)$ and $k_i(F)$ are determined by $F$ alone. Consequently any two exact representations of the same map, in the same node variables, --- on a graph, a hypergraph, a factor graph, or a multilayer graph, with any admissible rule class --- have the same essential interaction sets and the same interaction order; and $j\to i$ lies in the dependency graph $G_F$ exactly when $j$ belongs to some $A$ with $c_{i,A}\neq0$.
\end{proposition}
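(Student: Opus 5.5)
The plan has three steps: show that the coefficients are intrinsic to $F_i$, pass the invariance to any two exact representations, and then characterize the dependency graph through the coefficients.

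\emph{Step one: the coefficients depend only on $F_i$.} By the inclusion--exclusion formula following Eq.~\eqref{eq:multilinear},
\[
 c_{i,A}=\sum_{B\subseteq A}(-1)^{|A|-|B|}F_i(\ind_B).
\]
The right-hand side uses only the values of $F_i$ on configurations of the node variables. Nothing in it refers to an adjacency matrix, an incidence structure, or a rule class. Hence $\Ess_i(F)$ and $k_i(F)$, as defined in Eq.~\eqref{eq:essential}, are functions of $F$ alone. In the non-binary case, $k_i(F)$ was defined directly by the decomposition condition, which again involves only the values of $F_i$. Uniqueness of the expansion is quoted from~\cite{ODonnell2014}, so no choice of expansion could introduce dependence on the representation.

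\emph{Step two: invariance across representations.} Let $(S,D)$ and $(S',D')$ both be exact representations, in the same node variables, of the same map, so that $D(S)=D'(S')=F$. Then both representations yield the same component functions $F_i$. By step one they have the same essential sets and the same order. The phrase ``same node variables'' matters here: the coefficients are indexed by subsets of $V$. Relabelling or recoding the variables, as in Section~\ref{sec:recoding}, would change the index set, so I would note explicitly that the statement makes no claim under such recodings.

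\emph{Step three: the dependency graph.} This is the only part that needs an argument, and I expect it to be the main obstacle. Each monomial is affine in $x_j$, so collecting terms gives
\[
 \Delta_jF_i=\sum_{A\ni j}c_{i,A}\prod_{l\in A\setminus\{j\}}x_l .
\]
This is itself a multilinear polynomial in the remaining variables. Suppose $j$ lies in no $A$ with $c_{i,A}\neq0$. Then $\Delta_jF_i\equiv0$, so varying $x_j$ never changes $F_i$, and $j\to i$ is absent from $G_F$. Conversely, suppose $c_{i,A}\neq0$ for some $A\ni j$. Then $\Delta_jF_i$ is a nonzero multilinear polynomial, by uniqueness of the expansion applied on the cube of the other variables. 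A nonzero multilinear polynomial cannot vanish on the entire Boolean cube, again by uniqueness, since the zero function has the zero expansion. Hence some configuration $\bm{x}_{-j}$ exists at which flipping $x_j$ changes $F_i$, and $j\to i$ lies in $G_F$.

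One boundary case remains. By definition, $G_F$ contains only arcs with $j\neq i$. The claim should therefore be read for $j\neq i$, and I would note that for $j=i$ the same computation characterizes self-dependence. Finally, $\Ess_i(F)$ excludes sets of size one, while the characterization ranges over all $A$ with $c_{i,A}\neq0$, including singletons. This matches the statement, which says ``some $A$ with $c_{i,A}\neq0$'' rather than ``some $A\in\Ess_i(F)$''.
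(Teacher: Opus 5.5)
Your proposal is correct and follows the paper's proof: each $c_{i,A}$ is a fixed inclusion--exclusion combination of values of $F_i$, and the dependency-graph characterization comes from the multilinear form of $\Delta_jF_i$ together with uniqueness of the expansion. You spell out both directions of that equivalence, where the paper writes only the vanishing direction, and your computation also covers $j=i$; this matters slightly, because the paper's definition of $G_F$ (unlike that of $\partial_F i$) does not itself exclude self-arcs.
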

\begin{proof}
By Eq.~\eqref{eq:multilinear} each $c_{i,A}$ is a fixed linear combination of values of $F_i$, so it does not depend on how $F$ was written. For the second claim, $\Delta_jF_i=\sum_{A\ni j}c_{i,A}\prod_{r\in A\setminus\{j\}}x_r$ is multilinear and, by uniqueness of the expansion, vanishes identically only if every $c_{i,A}$ with $A\ni j$ vanishes.
\end{proof}

This is where the two senses fixed in Section~\ref{sec:introduction} separate: the essential sets are a property of the mechanism and survive every change of structural language, while the recorded grouping is a property of $S$ and does not. Three consequences follow. The emulation of Proposition~\ref{prop:emulation} preserves $\Ess_i(F)$ trivially, since it assigns $F_i$ itself, so it cannot be evidence that the dependence was never higher-order; the arbitrariness of Proposition~\ref{prop:incidence} is removed, $\Ess_i(F)$ being canonical where $\mathcal{Y}_i$ was a choice; and a rule family capped below $k_i(F)$ cannot represent $F_i$ at all, which is the misspecification priced in Section~\ref{sec:ex-group}.

The invariant is finer than the projection, but it does not coincide with the hyperedge family. For the two structures of Eq.~\eqref{eq:H12},
\begin{equation}
 \Ess_1\bigl(F^{(1)}\bigr)=\bigl\{\{2,3,4\}\bigr\},
 \qquad
 \Ess_1\bigl(F^{(2)}\bigr)=\bigl\{\{2,3\},\{2,4\},\{3,4\},\{2,3,4\}\bigr\},
 \label{eq:essential-H12}
\end{equation}
using $F^{(2)}_1=x_2x_3+x_2x_4+x_3x_4-2x_2x_3x_4$. The two mechanisms share the dependency graph $K_4$ and the interaction order $3$ yet differ as invariants, so what Lemma~\ref{lem:union} shows the projection to conflate, Eq.~\eqref{eq:essential} tells apart, using the map $F$ itself. Under the correspondence $A\mapsto\{i\}\cup A$, however, the set $\{2,3,4\}\in\Ess_1(F^{(2)})$ names the hyperedge $\{1,2,3,4\}$, which is in $H_1$, not $H_2$. Each hyperedge of $H_2$ incident to node~$1$ contributes only a quadratic term, while the cubic term $-2x_2x_3x_4$ comes from no hyperedge at all: the disjunction produces it by inclusion and exclusion over the three groups. The order of a mechanism can therefore exceed $|e|-1$ for every incident group, and $\Ess_i(F)$ reflects the structure together with the rule's aggregation over it. What is representation-independent is the joint dependence, not the bookkeeping, which is why the comparison of structures in the next section has to be conducted in bits. For continuous states the same statements hold for $C^{|A|}$ maps with $\Delta_A$ replaced by $\partial^{|A|}/\partial x_A$.

\section{Representational complexity}
\label{sec:rc}
We now formalize the cost of a structure--rule decomposition. The general framework is developed in Ref.~\cite{Rommens2026RC}; we recall only what is needed here and then specialize to the pairwise--higher-order comparison. All objects in this section are finite strings, and $\K(x)$ denotes prefix complexity with respect to a fixed universal prefix-free machine~\cite{LiVitanyi2019}. Conditional complexity is written $\K(x\mid y^*)$, where $y^*$ is a shortest program for $y$, and the prefix chain rule reads $\K(x,y)\aeq\K(x)+\K(y\mid x^*)$. Relations marked with $+$ hold up to an additive constant independent of system size.

\subsection{The floor and the two-part split}
\label{sec:rc-floor}
A representation of a finite dynamical map $F$ is a pair $(S,D)$ such that a fixed evaluator applied to $D$ and $S$ outputs $F$. Its representational complexity is the excess description length above the map itself,
\begin{equation}
 \RC(S,D;F)=\K(S,D)-\K(F).
 \label{eq:RC}
\end{equation}

\begin{proposition}[Description-length floor]
\label{prop:floor}
Every exact representation satisfies $\RC(S,D;F)\ageq0$. Moreover, the structure-free representation that stores a shortest program for $F$ in $D$ attains $\RC\aeq0$.
\end{proposition}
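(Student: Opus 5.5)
The plan has two parts: a lower bound from computability of the evaluator, and an upper bound from the explicit construction named in the statement. All bounds are taken up to additive constants independent of system size, as fixed at the start of Section~\ref{sec:rc}.

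\emph{Lower bound.} Fix the evaluator $E$, which by definition of a representation is a single fixed program with $E(S,D)=F$. A shortest prefix-free program for the pair $(S,D)$ has length $\K(S,D)$. Composing it with the fixed code of $E$ gives a program that outputs $F$, so
\[
\K(F)\aleq\K(S,D),
\]
with the constant depending only on $E$ and the universal machine. This is the standard fact that complexity does not increase under a fixed computable map. The pair is encoded as a single self-delimiting string, so no further term for separating $S$ from $D$ is needed. Rearranging gives $\RC(S,D;F)\ageq0$.

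\emph{Upper bound.} Let $S=\varnothing$ be a fixed empty string, and let $D=F^*$ be a shortest program for $F$. The evaluator handles this case by a convention built into $E$: when $S$ is empty, run $D$ and output its result. This is possible because $E$ is an arbitrary fixed evaluator, so this branch can be part of its design; equivalently, any evaluator can be modified to have it at constant cost. We then have $\K(S,D)\aleq\K(F^*)$, since the empty structure costs $O(1)$. We also have $\K(F^*)\aeq\K(F)$, since $F$ and $F^*$ are mutually computable given $\K(F)$. The direction $F^*\mapsto F$ is immediate. The direction $F\mapsto F^*$ requires care, but $\K(F^*)\aleq\K(F)$ holds directly: a shortest program for $F$ can be used to describe $F^*$ by outputting its own code. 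Together with the lower bound, this gives $\RC\aeq0$.

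\emph{Main obstacle.} The main subtlety is the step $\K(F^*)\aleq\K(F)$, together with the requirement that the evaluator accept a structure-free representation. I would settle the first by the standard quine-free argument: given $F^*$ as input we must output $F^*$. More simply, I would avoid it entirely by choosing $D$ to be a self-delimiting program whose complexity is bounded by its length, using $\K(p)\aleq|p|+\K(|p|)$. Since that bound introduces a logarithmic term, the cleaner route is to note that $\K(F^*)\aeq\K(F)$ is a standard lemma in~\cite{LiVitanyi2019}, and I would cite it. For the second requirement, I would state explicitly that the fixed evaluator is assumed universal on structure-free inputs, which is consistent with the definition of a representation.
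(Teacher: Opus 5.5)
Your proof is correct and matches the paper's: the lower bound $\K(F)\aleq\K(S,D)$ comes from applying the fixed evaluator, and the upper bound comes from the empty structure with $D$ a shortest program for $F$. You are more careful than the paper on the step $\K(F^*)\aleq\K(F)$, which the paper leaves implicit in ``this requires $\K(F)+O(1)$ bits''; your first argument already settles it with $O(1)$ overhead (a machine reads its self-delimiting input by simulating the universal machine on it, then prints what it read), so the detour through $\K(p)\aleq|p|+\K(|p|)$ and its logarithmic term is unnecessary.
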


\begin{proof}
Since a fixed evaluator computes $F$ from $(S,D)$, $\K(F)\aleq\K(S,D)$. Conversely, take $S$ to be the empty string and let $D$ output $F$; this requires $\K(F)+O(1)$ bits.
\end{proof}

Proposition~\ref{prop:floor} is a benchmark, not a recommendation to discard structure: a structure-free program generally provides little causal or mechanistic organization. It says only that structure cannot beat the intrinsic algorithmic description of the map. The chain rule separates the cost into
\begin{equation}
 \K(S,D)\aeq\K(S)+\K(D\mid S^*).
 \label{eq:two-part}
\end{equation}
This decomposition makes information relocation visible. A richer skeleton may increase $\K(S)$ while reducing the conditional cost of the rule; a sparser skeleton may do the reverse. The floor itself plays no part in any comparison. Because $\K(F)$ is common to every representation of the same map, it cancels whenever two are compared, and what remains is a difference of two-part codes rather than of absolute complexities.

\subsection{The unrestricted comparison}
\label{sec:projection-cost}
Suppose $G=\pi(H)$ for a fixed computable projection $\pi$. Then $\K(G\mid H^*)=O(1)$, and the chain rule, applied in both orders, gives
\begin{equation}
 \K(G)+\K(H\mid G^*)\aeq\K(H).
 \label{eq:conditional-H}
\end{equation}
The relevant construction is not merely that $G$ is computable from $H$, but that there is a fixed computable compiler $C$ acting on complete model descriptions,\footnote{The compilers used here are fixed, uniform computable functions between model descriptions that preserve the represented dynamics. A one-way compiler supplies such a map in one direction; a two-way translation supplies fixed computable maps in both. They need not be literal inverses, since distinct descriptions may encode the same dynamics.}
\begin{equation}
 C(H,D_H)=(G,D_G),
 \label{eq:compiler}
\end{equation}
which constructs $G=\pi(H)$ and emits a graph rule $D_G$ that evaluates the original hypergraph rule while retaining whatever grouping information it needs. Producing $(G,D_H)$ would not suffice, since $D_H$ need not be an admissible graph rule. From Eq.~\eqref{eq:compiler},
\begin{equation}
 \K(G,D_G)\aleq\K(H,D_H).
 \label{eq:no-double-count}
\end{equation}
The non-invertibility of the projection therefore does not by itself show that the hypergraph has a shorter optimal description. What the one-way compiler establishes is that a graph emulation need not be longer than the source hypergraph description by more than a constant. Equality of the two optimal lengths requires more, for instance a fixed reverse compiler recovering an admissible $(H,D_H)$ from $(G,D_G)$; without such reverse recoverability, Eq.~\eqref{eq:no-double-count} leaves the possibility that the graph description is strictly shorter, and genuine separations of this kind belong to the restricted setting of Section~\ref{sec:restricted}. In a two-part view, an efficient graph code charges only for the residual choice of $H$ within the projection preimage
\begin{equation}
 \pi^{-1}(G)=\{H:\pi(H)=G\},
 \label{eq:preimage}
\end{equation}
rather than paying independently for a full graph and a full hyperedge list. A one-way translation can thus avoid any additional penalty while changing where the information is stored, and genuine equivalence of descriptions requires an appropriate two-way translation. A preference between scientific models requires more than either: explicit restrictions on admissible structures, rules, translators, regularity classes, or code families.

\subsection{Restricted classes and the conditions for comparison}
\label{sec:restricted}
Let $\mathcal{M}=(\mathcal{L},\mathcal{C})$ consist of a structural language $\mathcal{L}$ and a class of rules $\mathcal{C}$, and let
\begin{equation}
 \mathcal{K}_{\mathcal{M}}(F)=
 \min_{\substack{S\in\mathcal{L},\,D\in\mathcal{C}\\D(S)=F}}
 \left[\K(S)+\K(D\mid S^*)\right]
 \label{eq:restricted}
\end{equation}
be the ideal restricted length of $F$, defined when the feasible set is non-empty. For graph and hypergraph classes $\mathcal{M}_G$ and $\mathcal{M}_H$ that both admit a representation of $F$, define
\begin{equation}
 \Delta\RC=
 \mathcal{K}_{\mathcal{M}_G}(F)-
 \mathcal{K}_{\mathcal{M}_H}(F).
 \label{eq:deltaRC}
\end{equation}
A positive value favors the hypergraph class, a negative value favors the graph class, and equality up to constants defines an equivalence regime.

Equation~\eqref{eq:deltaRC} carries information only if the two classes are not nested under a fixed translation. If $\mathcal{L}_G$ is defined syntactically, as the hypergraphs all of whose edges have size two, then every admissible graph representation is also an admissible hypergraph representation at $O(1)$ translation cost, the minimum defining $\mathcal{K}_{\mathcal{M}_H}(F)$ ranges over a superset of the one defining $\mathcal{K}_{\mathcal{M}_G}(F)$, and $\Delta\RC\ageq 0$ identically: the graph-preferred regime is empty and the comparison is vacuous --- provided, as we assume here, that the admissible rules are nested along with the structures.

A nontrivial sign therefore requires classes that are genuinely non-nested, and of the two routes to that condition only one is available at this level. Restricting the rule classes works: if admissible graph and hypergraph rules each exclude functions the other permits, neither minimum in Eq.~\eqref{eq:restricted} ranges over a superset of the other. Individuating a structural language by the entities it posits as primitive does not work here, although it is often invoked. A structure enters Eq.~\eqref{eq:restricted} only through $\K(S)$, which is a property of the string and not of the vocabulary used to read it, so whenever a hyperedge family and a dyadic skeleton are computable from each other their costs agree up to a constant however the two languages are individuated, exactly as in Section~\ref{sec:projection-cost}.

Separations of that second kind are real, but they are separations between codes rather than between complexities, and they appear only once Eq.~\eqref{eq:MDL} replaces $\K$ by a declared code family; Proposition~\ref{prop:projection-cost} exhibits one. Which rule classes are admissible is a modelling decision and should be stated; what is not available is a comparison in which one language is a relabelling of the other.

Subject to this condition there is no universal sign. If fixed, system-size-independent programs translate admissible graph representations into admissible hypergraph representations and back, the two minima differ by at most $O(1)$. If a translation produces a rule outside the target class --- a discontinuous lookup table, a nonlocal rule, or a function whose parameter count grows exponentially with degree --- the equivalence fails. This is exactly where modelling assumptions become consequential.

Finally, two comparison problems should be kept separate. In a \emph{fixed-skeleton} problem, $G$ may be an observed physical network and the question is whether additional group structure improves the explanation of the dynamics. In an \emph{optimized-structure} problem, both $G$ and $H$ are latent and selected jointly with their rules. A graph optimized for the map may omit edges that are redundant in an observed projection, while a fixed empirical graph cannot. Mixing these problems creates artificial description-length advantages for whichever side is allowed to optimize.

\section{An operational model-selection criterion}
\label{sec:mdl}
Kolmogorov complexity supplies the conceptual floor but is not computable. In data analysis it should be replaced by explicit codes and a likelihood. Minimum-description-length reconstruction of ordinary networks from dynamics is by now a mature methodology~\cite{Peixoto2019reconstruction,Peixoto2025mdl}, and the same discipline of charging explicitly for structure rather than selecting it by descriptive heuristics applies equally to the choice between pairwise and higher-order representations~\cite{Lambiotte2019,PeelPeixotoDeDomenico2022}; an information-theoretic criterion for redundancy among the hyperedge orders of a given hypergraph is already available on the structural side~\cite{Kirkley2025}. Let $\mathcal{D}$ be a set of observed transitions, trajectories, or event records. For a model class $\mathcal{M}=(\mathcal{L},\mathcal{C})$, define
\begin{equation}
 \Len_{\mathcal{M}}(\mathcal{D})=
 \min_{\substack{S\in\mathcal{L},\,D\in\mathcal{C}}}
 \left[
 \Len_{\mathcal{L}}(S)
 +\Len_{\mathcal{C}}(D\mid S)
 -\log_2 p(\mathcal{D}\mid S,D)
 \right].
 \label{eq:MDL}
\end{equation}
The first term encodes the structure, the second the rule conditional on it, and the third the data left unexplained by the model. Since $p(\mathcal{D}\mid S,D)\leq1$, the last term is non-negative and grows as the fit deteriorates, so a model that explains the data poorly pays for it in the same currency as one that is merely long. This is also where a class that is misspecified for $F$, in the sense of Section~\ref{sec:restricted}, registers its cost: it is not excluded by fiat but charged for the predictions it gets wrong. The preferred class is the one with the shorter total message, not the one with the larger catalogue of possible functions.

Equation~\eqref{eq:MDL} avoids three asymmetries common in informal comparisons. First, both structural and dynamical complexity are charged. Second, graph and hypergraph rules must be regularized at comparable strength. Third, exact emulation is not required: a simpler approximate model may beat a perfect but costly lookup table. Bayesian model comparison is closely related when code lengths are induced by priors, although the marginal likelihood integrates over parameters whereas the two-part expression above minimizes over an encoded model.

The result necessarily depends on the declared code families. This is not a defect: scientific claims always refer to a hypothesis class, whether or not that class is stated explicitly. The advantage of Eq.~\eqref{eq:MDL} is that it exposes the assumptions --- locality, symmetry, maximum interaction order, sparsity, parameter sharing, smoothness, and the treatment of node labels all contribute to the message length.

The rule term is the one most often left implicit, so we state the codes we have in mind; the comparison is only as meaningful as they are. Three families cover most cases. A lookup table for node $i$ costs $|\mathcal{X}|^{|\partial i|}\log_2|\mathcal{X}|$ bits and is the benchmark any structured family has to beat. A rule capped at interaction order $d$, written in the expansion of Eq.~\eqref{eq:multilinear}, costs $\lceil\log_2\binom{P_{i,d}}{s_i}\rceil$ bits to name which $s_i$ of the $P_{i,d}=\sum_{m\leq d}\binom{|\partial i|}{m}$ candidate subsets carry a nonzero coefficient, plus the coefficients themselves at the standard two-part precision of $\tfrac12\log_2 n$ bits each for $n$ observed transitions~\cite{rissanen1978modeling,Grunwald2007}. A hyperedge-aggregated rule of the kind used in Ref.~\cite{Peixoto2026} costs $O(1)$ for the shared aggregation plus one parameter per incident hyperedge, and pays nothing to name subsets, because the structural term has named them already. That contrast is the structure--rule trade-off in practice: the graph rule buys its freedom by paying $\lceil\log_2\binom{P_{i,d}}{s_i}\rceil$ where the hypergraph pays $\Len_{\mathcal{L}}(H)$. Parameters shared across nodes are encoded once, which is how a homogeneous mechanism earns its keep; and where a two-part code is unsatisfying, the normalized maximum likelihood or a Bayesian mixture over the same family may be substituted, since only differences of code lengths enter.

For an empirical comparison, we recommend reporting at least four quantities:
\begin{enumerate}[label=(\roman*)]
\item the structural length $\Len_{\mathcal{L}}(S)$;
\item the conditional rule length $\Len_{\mathcal{C}}(D\mid S)$;
\item predictive log loss on held-out trajectories or interventions;
\item the stability of the selected interaction order under alternative reasonable codes and data resolutions.
\end{enumerate}

The first two are worth reporting separately even though only their sum enters Eq.~\eqref{eq:MDL}. Moving cost between structure and rule does not change the total and therefore cannot by itself create a preference; the split is reported not as an additional criterion but as a diagnostic, showing whether an observed preference between classes originates in the structural term or in the rule term.

\section{Five regimes}
\label{sec:regimes}
The following examples are deliberately elementary. They are not universal theorems about graphs or hypergraphs; they show how the sign of a restricted description-length comparison changes with the system and the admissible codes. Their settings differ, and we state them at the outset: the first is a continuous linear system, the second, fourth and fifth are deterministic Boolean systems, the fifth being the second rewritten without hypergraphs, and the third concerns a parameterized family of continuous population models.

\subsection{A pairwise process}
\label{sec:ex-diffusion}
Consider linear diffusion
\begin{equation}
 \dot{\bm{x}}=-\gamma L_G\bm{x}
 \label{eq:diffusion}
\end{equation}
on a sparse graph, with $L_G$ the graph Laplacian. A graph code stores the edge list and the scalar parameter $\gamma$; the rule is shared across all nodes. If the higher-order language allows size-two hyperedges, the corresponding hypergraph description is a fixed relabelling and the two codes are equivalent up to constants. A hypergraph class that prohibits size-two edges may instead be misspecified, or may require a more elaborate translation, depending on its rule family. In neither case does higher-order vocabulary buy compression or predictive improvement.

Random walks make the same comparison with a less artificial pair of candidates, and the reduction can be exhibited directly. Let a walker at node $i$ choose uniformly one of the $d_i=\bigl|\{e\in\mathcal{E}:i\in e\}\bigr|$ hyperedges incident to $i$, and then uniformly one of that hyperedge's $|e|-1$ remaining members. Setting
\begin{equation}
 w_{ij}=\sum_{e\in\mathcal{E}:\,\{i,j\}\subseteq e}\frac{1}{|e|-1}
 \label{eq:hyperweights}
\end{equation}
gives $\sum_{j\neq i}w_{ij}=d_i$ whenever every hyperedge has $|e|\geq2$, so the transition probabilities are $P_{ij}=w_{ij}/d_i$: the process is a random walk on the weighted clique projection.

Its continuous-time counterpart is Eq.~\eqref{eq:diffusion} with $L_G$ replaced by the Laplacian $L_W$ of $\{w_{ij}\}$ if each incident hyperedge provides a channel of unit rate, so that $i$ is left at total rate $d_i$; if instead each walker jumps at rate one and then chooses, the generator is the random-walk Laplacian $D^{-1}L_W$, and the two coincide only when every node lies in the same number of hyperedges. Neither the weights~\eqref{eq:hyperweights} nor the reduction is new, and the attribution matters for the argument: the same construction appears in Ref.~\cite{Peixoto2026} precisely as evidence that hypergraph diffusion is a weighted-graph diffusion, and Ref.~\cite{Llabres2026} proves the analogous exact projection for a class of social impact models, where the projected weights are state-independent only for a linear impact function. Other definitions of a hypergraph walk yield different weights, but the structural conclusion below is unchanged.

The two descriptions can be compared directly. The hypergraph code stores the hyperedge list and a constant-length rule; the weighted-graph code stores the projection together with one weight per supported pair, at whatever precision is demanded. Neither wins in general. For a few disjoint groups the hyperedge list is short and the weights are redundant, so the hypergraph gives the shorter description; in the opposite limit, where heavily overlapping groups project onto a dense, nearly uniform weight matrix --- the situation of Section~\ref{sec:restricted} --- the weighted graph is short while the hyperedge list is long. The sign is set by the overlap structure and the precision of the weights, not by the choice of language.

What the comparison does not settle is more interesting. The weights are not free parameters: by Eq.~\eqref{eq:hyperweights} they are functions of the group structure. A reader given only the weighted graph sees a list of numbers with no account of where they came from, and cannot predict how they would change if a group were added or dissolved. A reader given the hypergraph sees the groups and the rule that produces the weights. The two descriptions may be close in length, or the graph may even be shorter, while differing in what they make explicit and in what they support under intervention.

This example is useful because it prevents the framework from building in a preference for hypergraphs: a genuinely pairwise, shared linear mechanism should remain pairwise unless additional structure earns its cost. It is equally useful as a caution in the other direction. The verdict on Eq.~\eqref{eq:diffusion} turns on description length alone; the verdict on the random walk turns on description length and mechanistic adequacy, and these need not agree.

\subsection{A group-sensitive process}
\label{sec:ex-group}
Let $H$ contain groups of size $k$ and consider the shared Boolean rule
\begin{equation}
 x_i(t+1)=\bigvee_{e\ni i}\;\bigwedge_{j\in e\setminus\{i\}}x_j(t).
 \label{eq:unanimity}
\end{equation}
The hypergraph code lists the groups once and names a constant-length rule. The clique projection $G=\pi(H)$ can emulate the map, but restricted graph rules need not: a conjunction over $k-1$ neighbors is not additively separable for $k>2$, and although a single group's unanimity is a threshold function, a disjunction over two or more groups is in general not linearly separable, even when the groups are disjoint\footnote{For two disjoint groups $\{i,a,b\}$ and $\{i,c,d\}$, suppose $(x_a\wedge x_b)\vee(x_c\wedge x_d)$ were $\ind[\,w_ax_a+w_bx_b+w_cx_c+w_dx_d\geq\theta\,]$. The positive inputs $(1,1,0,0)$ and $(0,0,1,1)$ give $w_a+w_b\geq\theta$ and $w_c+w_d\geq\theta$, the negative inputs $(1,0,1,0)$ and $(0,1,0,1)$ give $w_a+w_c<\theta$ and $w_b+w_d<\theta$, and the two pairs sum to contradictory bounds on $w_a+w_b+w_c+w_d$.}. A more flexible graph class must therefore either encode the operative subsets inside the rule or pay predictive loss.

How much that costs can be computed, and the answer runs against the intuition that a projection is a cheaper description.
\begin{proposition}[Cost of the projection]
\label{prop:projection-cost}
Let $H$ consist of $M$ pairwise disjoint hyperedges of size $k$ on $N$ nodes, so that $H$ is recoverable from $G=\pi(H)$ and $k$. Encode $H$ by naming its hyperedges among the $k$-subsets of $V$, and $G$ by naming its edges among the pairs:
\begin{equation}
 \Len(H)=\Bigl\lceil\log_2\tbinom{\binom{N}{k}}{M}\Bigr\rceil,
 \qquad
 \Len(G)=\Bigl\lceil\log_2\tbinom{\binom{N}{2}}{M\binom{k}{2}}\Bigr\rceil.
 \label{eq:projection-cost}
\end{equation}
Then $\Len(G)/\Len(H)\to k-1$ as $N\to\infty$ at fixed $M$ and $k$.
\end{proposition}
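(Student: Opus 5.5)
The plan is a direct asymptotic evaluation of both binomial logarithms at fixed $M$ and $k$, using the elementary estimate $\log_2\binom{n}{m}=m\log_2 n-\log_2 m!+O(m^2/n)$ for fixed $m$ and $n\to\infty$. This estimate follows from $\binom{n}{m}=\frac{n^m}{m!}\prod_{r<m}(1-r/n)$. Before the computation I will dispose of the preliminary claim in the statement, that $H$ is recoverable from $G$ and $k$. Because the hyperedges are disjoint, the connected components of $\pi(H)$ of size at least two are exactly the cliques on the hyperedges when $k\geq2$. Hence $H$ is the family of $k$-vertex components of $G$, and this reads off $H$ by a fixed procedure.

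For $\Len(H)$, set $n=\binom{N}{k}$ and $m=M$. Then $n\sim N^k/k!\to\infty$, so
\[
\log_2\tbinom{\binom{N}{k}}{M}=M\log_2\tbinom{N}{k}-\log_2 M!+o(1)=Mk\log_2 N-M\log_2 k!-\log_2 M!+o(1).
\]
For $\Len(G)$, set $n=\binom{N}{2}\sim N^2/2$ and $m=M\binom{k}{2}$, which is also fixed. Then
\[
\log_2\tbinom{\binom{N}{2}}{M\binom{k}{2}}=2M\tbinom{k}{2}\log_2 N-M\tbinom{k}{2}-\log_2\bigl(M\tbinom{k}{2}\bigr)!+o(1)=Mk(k-1)\log_2 N+O(1).
\]
The ceilings change each quantity by less than one, so they are absorbed into the $O(1)$ terms.

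The ratio is therefore $\frac{Mk(k-1)\log_2 N+O(1)}{Mk\log_2 N+O(1)}\to k-1$. This needs $Mk\geq1$, which holds for $M\geq1$; when $M=0$ both lengths vanish and the statement is vacuous, so I would assume $M\geq1$ and $k\geq2$ explicitly. The constraint $Mk\leq N$ is harmless as $N\to\infty$.

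There is no real obstacle here. The only points needing care are the uniformity of the binomial estimate for fixed $m$, where the correction factor $\prod_{r<m}(1-r/n)\to1$, and checking that the leading coefficients are $k$ and $2\binom{k}{2}=k(k-1)$ per hyperedge. Their quotient gives $k-1$, which matches the intuition that each of the $k$ members of a group contributes $k-1$ incident pair slots rather than one group slot.
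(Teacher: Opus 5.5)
Your proposal is correct and follows essentially the same route as the paper. Both apply the fixed-$m$ estimate $\log_2\binom{n}{m}=m\log_2 n-\log_2 m!+O(m^2/n)$ to the two codes, which gives leading terms $Mk\log_2 N$ and $2M\binom{k}{2}\log_2 N=Mk(k-1)\log_2 N$ with the rest $O(1)$, while your check that $H$ is the family of $k$-vertex components of $G$ and your assumptions $M\geq1$, $k\geq2$ make explicit what the paper leaves implicit.
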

\begin{proof}
For $E\ll Q$ one has $\log_2\binom{Q}{E}=E\log_2 Q-\log_2E!+O(E^2/Q)$, and $\log_2\binom{N}{m}=m\log_2N-\log_2m!+o(1)$. The leading terms are therefore $\Len(H)=Mk\log_2N+O(1)$ and $\Len(G)=2M\binom{k}{2}\log_2N+O(1)=Mk(k-1)\log_2N+O(1)$.
\end{proof}

Both codes determine the same object, so the comparison is not confounded by a difference in what is described. The projection is not a cheaper encoding of the grouping but a redundant one: it names $\binom{k}{2}$ pairs where one $k$-set suffices, and the redundancy grows with the order. Three qualifications keep this in proportion. It concerns these two codes and has no counterpart in $\K$, where each object is computable from the other and the lengths agree up to a constant (Section~\ref{sec:projection-cost}); a graph code that names cliques rather than edges avoids the factor, but it is then naming groups, in the sense of Proposition~\ref{prop:incidence}; and if the dynamics never uses the grouping, the graph description need not recover $H$ at all and pays nothing, as in Section~\ref{sec:ex-diffusion}.

For a fixed $G$, let $\mathcal{F}_G\subseteq\pi^{-1}(G)$ be the declared family of admissible hypergraphs and $q(\cdot\mid G)$ a declared conditional distribution over it, expressing which groupings are a priori more plausible. Identifying a member costs approximately $-\log_2 q(H\mid G)$ bits, and in the uniform case
\begin{equation}
 \Len_{\mathrm{unif}}(H\mid G)
 =\left\lceil\log_2\left|\mathcal{F}_G\right|\right\rceil.
 \label{eq:fiber}
\end{equation}

This is a code choice, not a representation-independent identity. If the skeleton is fixed and carries projected pairwise relations the rule does not otherwise need, the hypergraph may give the shorter restricted description. Equation~\eqref{eq:fiber} must not be added to a full independent code for $H$, which would double-count the grouping; with an optimal unrestricted conditional code, Eq.~\eqref{eq:no-double-count} already bounds the graph description by the hypergraph one up to a constant. The hypergraph-preferred regime therefore comes from the declared graph rule class, the fixed-skeleton constraint, or the inferential cost of learning the subsets, and not from projection ambiguity alone. Parity sharpens the point by separating cost from expressivity. Suppose the state of a target depends on the parity of a specified $k$-node group. The hypergraph description is short --- the group is named once and the rule is constant-length --- but shortness is not what decides the comparison, since by Section~\ref{sec:compressibility} parity is highly compressible in either language. What decides it is that every graph rule family whose interactions are capped below order $k$ is \emph{misspecified} for this map and, by Eq.~\eqref{eq:MDL}, pays for that in predictive loss rather than in structural bits. The mismatch is between the interaction order of the mechanism and the order the admissible rules can express. This example turns on description length together with the expressive limits of the declared rule class --- that is, on functional representability relative to a restriction, not in the unrestricted sense of Proposition~\ref{prop:emulation}.

\subsection{A contrast from ecology}
\label{sec:ex-ecology}
The distinction between grouping that must be specified independently and grouping that is derived separates published models, not only toy constructions. Bairey, Kelsic, and Kishony add higher-order coefficients to the species growth rates as independent parameters~\cite{Bairey2016}; a weighted graph does not determine those tensors, so a graph rule that reproduces the dynamics must encode the same information. Grilli, Barab\'as, Michalska-Smith, and Allesina, by contrast, study competitive networks in which the higher-order coefficients are completely defined by the pairwise ones~\cite{Grilli2017}, so a weighted graph and a fixed construction generate the higher-order terms at no additional cost. Similar ecological language can therefore carry different representational costs, and hence different preferences: the answer depends on whether higher-order coefficients are independent data or functions of pairwise parameters. Recognizing which case one is in, model by model, is exactly the discipline that Eq.~\eqref{eq:restricted} is meant to enforce. Both verdicts turn on mechanistic adequacy as much as on cost: what changes between the two descriptions is which quantities are presented as independent inputs and which are derived.

\subsection{An unresolved projection}
\label{sec:ex-unresolved}
Return to $H_1$ and $H_2$ in Eq.~\eqref{eq:H12}. Suppose only the projected $K_4$ and a trajectory are observed. The two rules in Eqs.~\eqref{eq:group-functions-1}--\eqref{eq:group-functions} differ only when a node has exactly two active neighbors, and the states in which none does are closed under both dynamics. A trajectory confined to them is therefore consistent with both rules however long it runs, while any other state separates them at the first update. That set is small and can be exhibited: writing $m$ for the number of active nodes, node $i$ has $m-x_i$ active neighbours, so no node has exactly two precisely when $m\in\{0,1,4\}$, which is six of the sixteen states. The failure of identification here is a property of a small set of initial conditions rather than of the projection, and it is worth keeping apart from the structural ambiguity of Lemma~\ref{lem:union}, which no amount of trajectory data removes. What fails to be identified is the model rather than the structure: the projection is observed, a hypergraph selected from the clique would be an imputation, and an arbitrary graph lookup table would be an equally under-constrained explanation. Additional observations can change the result. Direct records of four-person events support $H_1$ as structural data. Records covering all four three-person groups support $H_2$. Interventions that set $(x_2,x_3,x_4)=(1,1,0)$ distinguish the two dynamics outright. Both rules depend on the neighborhood only through the number of active neighbors, so the discriminating intervention changes that number rather than regrouping the same neighbors, and a graph rule carrying a single threshold value compresses better than either hyperedge list; group structure that survives count-preserving perturbations is the case of Section~\ref{sec:ex-group}. This regime turns on statistical identifiability, and illustrates why interaction data, state trajectories, and interventions play different inferential roles.

\subsection{A case decided by structural representation}
\label{sec:ex-lift}
The examples so far are decided by cost, by the expressive limits of a declared rule class, or by identifiability. Structural representation, the first of the four questions, decides a case of its own, and it is the case most often used to argue that the choice of language is immaterial. Rewrite the group-sensitive process of Section~\ref{sec:ex-group} twice without hypergraphs: as a factor graph, with one auxiliary node per group joined to its members, and as an edge-coloured multilayer graph, in which each hyperedge's clique is placed in a layer of its own~\cite{Peixoto2026}. Both are graphs in the ordinary sense and both reproduce the node-level map exactly --- the multilayer graph with the rule taking the conjunction within each layer and the disjunction across layers, and the factor graph provided auxiliary updates count as instantaneous rather than as a time step of their own, since otherwise the lift changes the trajectory and the equivalence does not hold at the level claimed for it.

Three of the four questions then tie, description length only with a qualification once codes are declared. Functional representability ties by construction; identifiability ties, because the node-level trajectories agree whenever the auxiliary variables go unobserved; and description length ties up to a constant at the algorithmic level of Section~\ref{sec:projection-cost}, since the group family and the labelled decomposition are computable from each other. Under declared codes the factor graph still ties, its incidences costing what the hyperedge list costs, while the layer-labelled clique expansion pays the factor of Proposition~\ref{prop:projection-cost}. Beyond that, what differs is only which entities the formalism posits as primitive, and that is not idle. An auxiliary node is an object: it can be added, removed, or measured, and a model containing one commits itself to a group being that kind of thing, whereas a layer index is a label on edges and commits to nothing beyond which edges belong together. Where the groups are independently identifiable --- a complex, a committee, a reaction --- the factor and hyperedge descriptions therefore assert what the colouring does not, and between them the choice is settled with cost, fit, and identifiability all silent. This is the sense in which the bipartite equivalence so often invoked here is less decisive than it looks: it converts a hypergraph into a graph, but only by naming the groups as nodes.

\section{Empirical evidence and scientific adequacy}
\label{sec:empirical}
The expressiveness argument yields a strong empirical lesson, and a symmetric one. If every candidate hypergraph model has an exact representative in an unrestricted graph class, trajectories cannot favor the hypergraph language by fit alone. Yet the same fact prevents fit from supporting the unrestricted graph class: a class that can reproduce every map makes no falsifiable prediction beyond the dependency structure it declares, and none at all where the skeleton is itself free (Section~\ref{sec:restricted}).

\subsection{Evidence and inference}
\label{sec:evidence}
The classification that follows overlaps deliberately with the review of purported evidence in Ref.~\cite{Peixoto2026}, with which we largely agree. The difference is that we hold the unrestricted graph class to the same standard, which by the argument above it also fails.

\paragraph{Joint-event data.} Co-authorships, conversations, biochemical complexes, ecological assemblages, and other group events can make a hypergraph the most direct encoding of the observations. This justifies the structure as a data representation, but does not by itself prove that the dynamics responds irreducibly to the whole group.
\paragraph{Dynamical data.} Time series can test restricted rules. Evidence for a higher-order mechanism requires that a regularized group-based model outperform comparably flexible pairwise alternatives in compression or out-of-sample prediction. Recovering hyperedges from a clique projection or a node co-occurrence matrix is in general underdetermined~\cite{LaRockLambiotte2025,YoungPetriPeixoto2021}. Read together with Proposition~\ref{prop:emulation}, this is unsurprising: neither a projected graph nor a trajectory of a map that a graph can emulate exactly can, by itself, certify which compatible hyperedge configuration generated the observations.
\paragraph{Constructive precedent.} None of this counts against principled hypergraph inference. Bayesian reconstruction can treat a hypergraph as a latent explanation of pairwise network data and compare it with simpler alternatives, as noted above~\cite{YoungPetriPeixoto2021}; probabilistic hypergraph models can support missing-hyperedge prediction and overlapping-community inference~\cite{ContiscianiBattistonDeBacco2022}; generative hypergraph blockmodels provide model-based clustering objectives~\cite{ChodrowVeldtBenson2021}; and a description-length criterion can already say which orders of a measured hypergraph are structurally redundant~\cite{Kirkley2025}. These works do not all implement Eq.~\eqref{eq:MDL}, but they illustrate the general distinction between inferential models that expose and penalize their assumptions and heuristic imputation from a projection. The relevant contrast is therefore not between graphs and hypergraphs as such, but between inference that accounts for the structure it posits and inference that does not.
\paragraph{Interventions.} Some perturbations discriminate where observation cannot. Consider holding fixed, for each node, the number of active neighbors while changing which group those neighbors belong to. A rule that responds to intact groups can change its output under such a perturbation, whereas one that responds only to aggregate exposure cannot: the disjoint-group rule of Section~\ref{sec:ex-group} changes it, while the all-triples model $H_2$ of Eq.~\eqref{eq:H12} does not, being symmetric and hence a function of the number of active neighbors alone. A discriminating outcome therefore certifies group sensitivity, whereas a null outcome does not certify aggregate response. Observational equivalence need not survive intervention, which is why interventional data can settle questions that observation alone may not.
\paragraph{Mechanistic measurements.} Sometimes the interaction unit is observed independently of the response: a chemical complex binds jointly, a committee acts collectively, or a synaptic mechanism integrates a specific combination of inputs. Such evidence should enter the structural code or prior rather than being forced to reappear as an opaque fitted rule. It also breaks ties that description length cannot. When an appropriate two-way translation exists, Eq.~\eqref{eq:no-double-count} and its converse leave the graph and hypergraph descriptions carrying the same total algorithmic information up to a constant, yet they need not make the same hypothesis transparent, and which of them does depends on what was measured. If group interactions are directly observed, placing them in $H$ makes the correspondence between data, parameters, and mechanism explicit; if only dyadic transmission is measured, a graph may play that role.

\subsection{The level at which equivalence is claimed}
\label{sec:level}
Phenomenology alone is the weakest form of evidence. Discontinuous transitions, hysteresis, oscillations, and chaos are compatible with multiple microscopic mechanisms. More importantly, an equivalence claim holds only relative to a level of description, and the level is rarely stated. Equality under a homogeneous mean-field closure establishes equivalence at that approximation level and does not by itself establish more: local correlations, overlap, finite-size fluctuations, nucleation pathways, metastable lifetimes, and the response to localized perturbations may all still differ. A claim of structural equivalence should therefore specify whether it concerns a fixed point, a bifurcation diagram, the full stochastic process, or an interventional response.
\paragraph{}

Triadic percolation makes the distinction concrete~\cite{SunRadicchiKurthsBianconi2023}. In that model a structural graph is coupled to a signed regulatory network whose nodes switch the links between other nodes on or off. The order parameter obeys an iterated map $R^{(t)}=h_p(R^{(t-1)})$, and whether a stationary state is ever reached is governed by the derivative of $h_p$ there --- a fixed point $R^{\ast}$ is locally attracting when $|h_p'(R^{\ast})|<1$ ---, not by the stationary equation $R^{\ast}=h_p(R^{\ast})$: maps with identical fixed points can differ in which of them are approached. Reproducing that equation is therefore strictly weaker than reproducing $h_p$, which fixes stability and orbits as well --- a gap that matters here because period doubling and a route to chaos are among the model's central results. Any two models agreeing only on stationary states may thus disagree on the dynamics those states were meant to summarize. A separate observation points the same way: the hypergraph generalization in which nodes regulate hyperedges reproduces the same route to chaos~\cite{SunBianconi2024}, so this phenomenology does not discriminate between the two representations either.

\section{Discussion}
\label{sec:discussion}

The graph--hypergraph debate has been sharpened by the observation that a graph can support arbitrary multivariate node functions, and the higher-order literature should take its consequences seriously. A hypergraph toy model does not show that a phenomenon requires a hypergraph; a clique observed in pairwise data does not identify a hyperedge; an abrupt transition is not a structural fingerprint. Higher-order models should be compared with graph-based alternatives rather than assumed from phenomenology. What the observation does not do is settle modelling choice. An emulation that relocates the grouping into the rule has changed where the information sits, not removed it, and by Proposition~\ref{prop:incidence} the resulting model carries an incidence structure whether or not it is displayed as one. Proposition~\ref{prop:invariance} puts that on a footing independent of the parametrization: the essential interaction sets of the map, and with them its interaction order, are the same in every exact representation. Relocation is possible, elimination is not.

Representational complexity makes that relocation auditable. Non-invertibility of the clique projection is not by itself a proof of a hypergraph advantage, since an efficient graph code stores only the residual grouping conditional on the projection; and the containment of a restricted hypergraph family within all graph neighborhood functions is not a graph advantage, since the containing class pays for its freedom once rule complexity and finite data are considered. Where one language is merely a relabelling of the other, Section~\ref{sec:restricted} shows there is no sign to find at all. That graph-preferred, hypergraph-preferred and unresolved regimes all remain possible is a feature rather than a defect: a criterion that always selected one representation would simply encode that preference in advance.

This perspective also clarifies the role of alternative constructions. Bipartite incidence graphs, factor graphs, multilayer lifts, and state-space augmentations can faithfully represent higher-order systems, and may be excellent modelling tools. As Section~\ref{sec:ex-lift} shows, however, they need not be neutral between the hypotheses at issue, and by Proposition~\ref{prop:incidence} a construction that introduces one auxiliary node per group has named groups as primitives whatever it is called. A many-to-one projection or a higher-dimensional lift can establish a realization, but not necessarily an invertible, cost-preserving, or mechanistically neutral equivalence.

The practical next step is empirical. For the same transition data, one can fit comparably regularized graph and hypergraph model families, encode both structure and rule, and compare their held-out message lengths. Synthetic benchmarks should include genuinely pairwise, genuinely group-dependent, and ambiguous systems, with noise and partial observation varied independently. Empirical studies should add interventions or direct event data where possible. Such comparisons would move the discussion away from universal statements about vocabulary and toward reproducible claims about concrete systems.

\section{Conclusions}
\label{sec:conclusions}
There is no contradiction between two statements: graphs with unrestricted node functions can emulate a broad class of higher-order dynamics, and higher-order representations can still be the more parsimonious or scientifically adequate models under explicit restrictions. The first is a result about functional containment; the second is a question of model selection. Emulating higher-order dynamics on a graph is not the same as eliminating higher-order information, and equal expressiveness is not equal scientific standing. Two results make this concrete rather than programmatic. The interaction order of a finite-state map --- for binary states, exhibited by its essential interaction sets --- is an invariant of the map itself, so it is a property of the mechanism and not of the language chosen to display it. And under declared structural codes the projection of $M$ disjoint groups of size $k$ costs about $k-1$ times the hyperedge list, so where the grouping is used, describing it through pairs is the more expensive option rather than the cheaper one. No structural language is universally privileged when dynamical rules are unrestricted. Pairwise and higher-order representations become scientifically distinguishable only relative to constrained, genuinely non-nested model classes and informative data, where description length, prediction, identifiability, and mechanistic adequacy can be compared --- and reported as complementary criteria rather than collapsed into a single ranking. This, rather than the existence of an unrestricted encoding, is the level at which the debate can be resolved.

\section*{Acknowledgements}
This work was supported by the European Union's Horizon Europe Marie Sk\l odowska-Curie Actions under the ``BeyondTheEdge: Higher-Order Networks and Dynamics'' project (Grant Agreement No.~101120085). G.F.A. was supported by the Funda\c{c}\~ao de Amparo \`a Pesquisa do Estado de S\~ao Paulo (FAPESP), Process Nos.~2024/16711-8 and 2025/04409-8. Y.M. was partially supported by the Government of Arag\'on, Spain, and ERDF ``A way of making Europe'' through grant E36-23R (FENOL), and by Grant No.~PID2023-149409NB-I00 from MICIU/AEI (10.13039/501100011033) and ERDF ``A way of making Europe.''

\end{document}